\newtheorem{theorem}{Theorem}
\newtheorem{corollary}{Corollary}
\newcounter{example}
\newenvironment{example}[1][]{\refstepcounter{example}\par\medskip\textbf{Example~\theexample. #1}}{\medskip}
\def\tr{{\rm tr\,}}
\title{From Golden to Unimodular Cryptography}
\author{Sergiy Koshkin and Taylor Styers\\
Department of Mathematics and Statistics\\
 University of Houston-Downtown\\
 One Main Street\\
 Houston, TX 77002\\
 e-mail: koshkins@uhd.edu}
\date{}
\begin{document}

\maketitle
\begin{abstract}\
We introduce a natural generalization of the golden cryptography, which uses general unimodular matrices in place of the traditional $Q$ matrices, and prove that it preserves the original error correction properties of the encryption. Moreover, the additional parameters involved in generating the coding matrices make this unimodular cryptography resilient to the chosen plaintext attacks that worked against the golden cryptography. Finally, we show that even the golden cryptography is generally unable to correct double errors in the same row of the ciphertext matrix, and offer an additional check number which, if transmitted, allows for the correction.
 
\bigskip

\textbf{Keywords}: Fibonacci numbers, recurrence relation, golden ratio, golden matrix, unimodular matrix, symmetric cipher, error correction, chosen plaintext attack
\end{abstract}

\section{Introduction}

In a number of papers \cite{St78,St06,St07} and books \cite{StFib, StHar} Stakhov developed the so-called ``golden cryptography", a system of encryption based on utilizing matrices with entries being consecutive Fibonacci numbers with fast encryption/decryption and nice error-correction properties. It was applied to creating digital signatures \cite{BAh}, and somewhat similar techniques are used for image encryption and scrambling \cite{Mish}. However, the golden cryptography and its generalizations are vulnerable to the chosen plaintext attacks, which make it insecure \cite{ReySan,Tah}. As pointed out in \cite{Tah}, this is due to rigidity of the scheme, it has too few parameters to make the generation of coding matrices hard to backtrack. Their suggestion was to use an external hash function to add the extra parameters, multiple encryption was suggested in \cite{Sud}, and the use of Haar wavelets in \cite{Moh}. 

We propose a much more intrinsic generalization of the golden cryptography, which increases the number of free parameters while preserving its error correction properties. The idea is to use an arbitrary unimodular matrix in place of the so-called golden matrix $Q$. There are some mathematical subtleties with implementing this idea, which we work out. In particular, we prove that under mild assumptions there is an analog of the golden ratio for unimodular matrices, what we call the {\it unimodular ratio}, and the ratios of the coding matrix entries are close to it, a key property exploited for error correction. While most types of errors are then correctable as in the golden cryptography, it turns out that double errors in the same row of the ciphertext matrix can not be corrected in general even there. This fact was overlooked in \cite[11.5]{StHar}, and can not be remedied without transmitting additional information. We suggest a particularly natural additional check number, the 
{\it column ratio}, for this purpose. 

The paper is organized as follows. In Section \ref{Gold} we review the basics of golden cryptography, in Section \ref{UniCod} we show how its attractive features can be replicated using arbitrary unimodular matrices. This requires some mathematical excursion into the existence of limits of ratios, which we provide in Section \ref{UniRatio}. In Section \ref{ErrCor} we show how most of the golden error correction carries over to the unimodular cryptography, and in Section \ref{RowErr} we discuss the issues with correcting double errors in a row. Finally, in Section \ref{ColRat} we introduce the column ratio, and demonstrate how it resolves those issues.

\section{Golden Cryptography}\label{Gold}

Consider the matrix $Q=\begin{bmatrix}1 & 1 \\1 & 0 \\\end{bmatrix}$. When this matrix is taken to the $n$-th power the entries are the consecutive Fibonacci numbers: 
\begin{equation}\label{Qn}
Q^n= \begin{bmatrix} F_{n+1} & F_n\\F_n & F_{n-1}\\\end{bmatrix},
\end{equation}
and it follows from their properties that $\det Q^n=(-1)^n$. These matrices Stakhov calls the ``golden matrices", because the ratios of the entries converge to the golden ratio $\tau$ as $n\rightarrow\infty$. The idea of golden cryptography can now be explained as follows. Split the plaintext into a sequence of blocks arranged into groups of four, and use some permutation to place them as the entries of a $2\times2$ plaintext matrix $P=\begin{bmatrix}p_{11}&p_{12}\\p_{21}&p_{22}\end{bmatrix}$. The ciphertext matrix $C$ is then obtained as $C=P\,Q^n$, the chosen permutation and the number $n$ serve as the secret key. To decrypt, one only needs to apply the inverse golden  matrix $Q^{-n}=(-1)^n\begin{bmatrix}F_{n-1}&-F_n\\-F_n&F_{n+1}\end{bmatrix}$, to wit $P=C\,Q^{-n}$. Note that $\det C=(-1)^n\det P$, so $\det P$ can serve as a check number. 
\begin{example}\label{goldexamplex}
Suppose we want to encrypt MATH. Representing each letter by its number in the English alphabet (starting with $0$) we get the plaintext string $12\ \,0\ \,19\ \,7$. It can be arranged into a plaintext matrix 
$P = \begin{bmatrix}12 & 0 \\ 19 & 7\end{bmatrix}$. The check number to be sent to the receiver is $\det P=84$.  For this example we chose $n=10$. Our coding golden matrix becomes \\*
$$
Q^{10} = \begin{bmatrix}F_{11}&F_{10}\\F_{10}&F_{9}\end{bmatrix} = \begin{bmatrix}89&55\\55&34\end{bmatrix}.
$$ 
Multiplying $P$ by it gives the ciphertext matrix:
$$
C =P\,Q^{10}= \begin{bmatrix}12 & 0 \\ 19 & 7\end{bmatrix}\begin{bmatrix}89&55\\55&34\end{bmatrix} 
=\begin{bmatrix}1068&660\\2076&1283\end{bmatrix}.
$$
Comparing the check number to the determinant of the ciphertext matrix, $\det C= 84=\det P$, the receiver can be sure that the code was transmitted correctly. Knowing the key to be $n=10$ the receiver can decrypt the matrix by inverting $Q^{10}$, and multiplying $C$ by the inverse $C \,Q^{-10}$:
$$ 
C\,Q^{-10} = \begin{bmatrix}1068&660\\2076&1283\end{bmatrix}\begin{bmatrix} 34&-55\\-55&89 \end{bmatrix} 
= \begin{bmatrix} 12&0\\19&7\end{bmatrix}=P.
$$ 
\end{example}

As a method of encryption, the golden cryptography is vulnerable to the chosen plaintext attacks \cite{ReySan}. Namely, if 
$P = \begin{bmatrix}1 & 0 \\ 0 & 0\end{bmatrix}$ then $C=P\,Q^n=\begin{bmatrix}F_{n+1} & F_n\\0 & 0\end{bmatrix}$, 
so $F_n$, and then the secret key $n$, can be recovered from $C$, e.g. by using the Binet formula. This will remain the case even if we allow fractional values for $n$, as in the generalized golden cryptography.

Nonetheless, the golden encryption/decryption is fast and has nice error correction properties \cite[11.5]{StHar}. Aside from the determinant check, there are also built-in checking relations in $C$ that do not require transmission of any additional information. They exploit the recurrence properties of Fibonacci numbers instead. Since the entries of $P$ are positive integers for $n$ even from $P=C \,Q^{-n}$ we obtain the following inequalities:
\begin{gather}\label{CheckIneq}
c_{11}F_{n-1} - c_{12}F_{n} \geq 0\notag\\
-c_{11}F_{n} + c_{12}F_{n+1} \geq 0\notag\\
c_{21}F_{n-1} - c_{22}F_{n} \geq 0\\
-c_{21}F_{n} + c_{22}F_{n+1} \geq 0\,,\notag
\end{gather}
which yield
$$ 
\frac{F_{n}}{F_{n-1}} \leq \frac{c_{11}}{c_{12}}\,, \frac{c_{21}}{c_{22}}\leq \frac{F_{n+1}}{F_{n}}\,.
$$
The properties of Fibonacci numbers imply that both ratios converge to the golden ratio $\tau=\frac{1+\sqrt{5}}{2}$, the case of odd $n$ is analogous. So for large enough $n$ of any parity we have approximate relations $c_{11}\approx\tau c_{12}$, $c_{21}\approx\tau c_{22}$. Should one or more of the ciphertext entries be miscommunicated these relations can be used to recover the correct values, see Section \ref{ErrCor}.

\section{Unimodular and coding matrices}\label{UniCod}

Having such a narrow choice of coding options is undesirable in a number of applications (particularly if one wishes to ensure security of encryption). Some generalizations were offered by Stakhov himself \cite{StRoz}, see also \cite{Falc}, \cite[11.3]{StHar}, the so-called generalized $k$-golden cryptography, and by Nalli \cite{Nal}. The golden matrices \eqref{Qn} have rather special, even ``unique", properties, and these generalizations attempt to preserve as many of them as possible. However, not all of them are relevant to cryptography. Let us summarize the properties of $Q^n$ that make them useful for error correction:

\begin{enumerate}

\item The entries are consecutive elements of a sequence $F_n$ satisfying a second order recurrence relation.

\item Determinant is $\pm1$.

\item There is a limit of $\frac{F_{n+1}}{F_n}$ when $n\to\infty$.

\end{enumerate}

A closer look at \eqref{CheckIneq} shows that we do not need to choose entries from a single sequence, a matrix of the form 
$M_n=\begin{bmatrix} A_{n+1} & A_n\\B_{n+1} & B_n\end{bmatrix}$ would work as long as $A_n$, $B_n$ satisfy the same recurrence relation, and $\frac{A_{n+1}}{A_n}$, $\frac{B_{n+1}}{B_n}$ have the same limit. If we set $M_n=U^n$ for some matrix 
$U$ with $\det=\pm1$, i.e. a {\it unimodular matrix}, then the determinant condition will be satisfied as well. While these conditions seem loose they are in fact quite restrictive, as the next theorem shows.
\begin{theorem}\label{UniPower}
Let $U$ be a matrix such that $U^n=\begin{bmatrix} A_{n+1} & A_n\\B_{n+1} & B_n\end{bmatrix}$ for some sequences 
$A_n$, $B_n$. Then either $U$ is degenerate or $U=\begin{bmatrix} \alpha & 1\\\gamma & 0\end{bmatrix}$. In particular, if $U$ is unimodular then $U=\begin{bmatrix} \alpha & 1\\\pm1 & 0\end{bmatrix}$.
\end{theorem}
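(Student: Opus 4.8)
The plan is to exploit the ``staircase'' structure forced by the hypothesis: the second column of $U^n$ coincides with the first column of $U^{n-1}$, since both equal $\begin{bmatrix}A_n\\B_n\end{bmatrix}$. Writing $e_1=\begin{bmatrix}1\\0\end{bmatrix}$ and $e_2=\begin{bmatrix}0\\1\end{bmatrix}$ for the standard basis, this reads $U^n e_2 = U^{n-1}e_1$. Taking the smallest useful case $n=2$, where both $U^2$ and $U^1$ are available in the stated form, gives $U^2 e_2 = U e_1$, that is $U(Ue_2)=Ue_1$, and hence
$$
U\,(Ue_2-e_1)=0.
$$
Everything will follow from this single relation, so no induction or appeal to higher powers is actually needed.

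Next I would introduce $w:=Ue_2-e_1$. Since $Ue_2$ is the second column of $U$, writing $U=\begin{bmatrix}\alpha & b\\\gamma & d\end{bmatrix}$ gives $w=\begin{bmatrix}b-1\\d\end{bmatrix}$, and the displayed identity says precisely $w\in\ker U$. Here the dichotomy of the theorem appears. If $U$ is degenerate, i.e.\ $\det U=0$, there is nothing to prove and the first alternative holds. If instead $U$ is nondegenerate, then $\ker U=\{0\}$, forcing $w=0$, i.e.\ $b=1$ and $d=0$; this is exactly $U=\begin{bmatrix}\alpha & 1\\\gamma & 0\end{bmatrix}$, the second alternative.

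For the unimodular refinement I would simply compute the determinant of the matrix just obtained: $\det U=\alpha\cdot 0-1\cdot\gamma=-\gamma$. Since a unimodular matrix is automatically nondegenerate, the first case cannot occur, so $U$ must already have the form $\begin{bmatrix}\alpha & 1\\\gamma & 0\end{bmatrix}$, and imposing $\det U=\pm1$ forces $\gamma=\pm1$. This yields $U=\begin{bmatrix}\alpha & 1\\\pm1 & 0\end{bmatrix}$ as claimed.

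The argument is short, so the only real care needed is bookkeeping rather than any genuine obstacle. I would make sure the index range is handled honestly: the hypothesis must be available at both $n=1$ and $n=2$ for the column-matching identity to make sense, which is why I invoke it there rather than at a general $n$. The one conceptual point to pin down is the reading of ``degenerate'' as $\det U=0$ (equivalently $\ker U\neq\{0\}$); once that is fixed, the kernel condition $w\in\ker U$ cleanly produces the stated dichotomy, with the structural identity $U^n e_2=U^{n-1}e_1$ doing essentially all the work.
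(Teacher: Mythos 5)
Your proof is correct, and it rests on exactly the same key identity as the paper's: both extract from $U^2=U\cdot U$ and the assumed staircase form the relation (second column of $U^2$) $=$ (first column of $U$), which in the paper's notation is $U\begin{bmatrix}\beta\\\delta\end{bmatrix}=\begin{bmatrix}\alpha\\\gamma\end{bmatrix}$ and in yours is $U(Ue_2-e_1)=0$; like the paper, you also observe that only the powers $n=1,2$ are ever needed. Where you genuinely diverge is in how the dichotomy is harvested from this relation. The paper works componentwise: it splits on whether $\beta=1$, solves the $\beta\neq1$ case for $\alpha=\frac{\beta\delta}{1-\beta}$ and $\gamma=\frac{\delta^2}{1-\beta}$, and then computes $\det U=0$ by hand to land in the degenerate alternative, while the $\beta=1$ case forces $\delta=0$. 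You instead keep the relation in vector form and invoke the invertibility dichotomy: either $U$ is degenerate, or $\ker U=\{0\}$, in which case $w=Ue_2-e_1=0$ immediately gives $\beta=1$, $\delta=0$. Your route buys a cleaner and more conceptual finish --- no case split and no determinant computation, since the theorem's disjunction maps directly onto ``singular or injective'' --- whereas the paper's explicit solution has the minor side benefit of exhibiting what the degenerate matrices actually look like in the $\beta\neq1$ case. The unimodular refinement ($\det U=-\gamma=\pm1$, hence $\gamma=\pm1$) is identical in both arguments.
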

\begin{proof}
Let $U=\begin{bmatrix} \alpha & \beta\\\gamma & \delta\end{bmatrix}$. Since $U^2=U\cdot U$ we have $\begin{bmatrix}\alpha\\\gamma\end{bmatrix}=U\begin{bmatrix}\beta\\\delta\end{bmatrix}$, or explicitly
$$
\begin{cases}
\alpha=\alpha\beta+\beta\delta\\
\gamma=\gamma\beta+\delta^2
\end{cases}
$$
If $\beta\neq1$ we can solve for $\alpha$, $\gamma$ as follows
$$
\begin{cases}
\alpha=\frac{\beta\delta}{1-\beta}\\
\gamma=\frac{\delta^2}{1-\beta}\,.
\end{cases}
$$
But then $\det U=\frac{\beta\delta}{1-\beta}\,\delta-\frac{\delta^2}{1-\beta}\,\beta=0$, i.e. $U$ is degenerate. If $\beta=1$ then the system reduces to $\alpha=\alpha+\delta$ and $\gamma=\gamma+\delta^2$, i.e. it is satisfied by $\delta=0$ with no conditions on $\alpha$ and $\gamma$. The last claim follows since $\det U=\gamma$.
\end{proof}
With only $\alpha$ as a free parameter we are not that far from the original golden matrix $Q$ with $\alpha=1$. 
Indeed, $Q_k^{\,n}:=\begin{bmatrix}k & 1\\1 & 0\end{bmatrix}^n$ are the so-called $k$-golden matrices of \cite{StRoz,StHar}, whose elements obey a simple generalization of the Fibonacci recurrence relation $F_{n+1}^{(k)}=kF_{n}^{(k)}+F_{n-1}^{(k)}$. As pointed out in \cite{Tah}, encryption by the $k$-golden matrices is vulnerable to the same kind of chosen plaintext attack as encryption by the original golden matrices, it does not even help to allow fractional values of $n$, as in the generalized 
$k$-golden cryptography.

But the peculiarity of Theorem \ref{UniPower} is that severe restrictions on $U$ follow just from the initial multiplication 
$U^2=U\cdot U$. Once this is guaranteed we get $M_n=U^n$ for all $n$ as a bonus. This suggests that instead of taking $M_n$ to be a bare unimodular power, we should modify it by an additional initial factor, i.e. set $M_n=U^nM_0$.
\begin{theorem}\label{Mn}
Let $M_n=U^nM_0$, where $M_0=\begin{bmatrix} A_{1} & A_0\\B_{1} & B_0\end{bmatrix}$ with $A_0$, $B_0$ arbitrarily chosen, and $\begin{bmatrix}A_{1}\\B_{1}\end{bmatrix}=U\begin{bmatrix}A_0\\B_0\end{bmatrix}$. Then $M_n=\begin{bmatrix} A_{n+1} & A_n\\B_{n+1} & B_n\end{bmatrix}$ for all $n\geq0$, and 
\begin{gather}\label{RecurAB}
A_{n+1}=(\tr U)\,A_n-(\det U)\,A_{n-1}\\
B_{n+1}=(\tr U)\,B_n-(\det U)\,B_{n-1}\,.\notag
\end{gather}
\end{theorem}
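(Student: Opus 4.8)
The plan is to reduce everything to the action of $U$ on column vectors and then invoke the Cayley--Hamilton theorem for the recurrence. First I would introduce the column vectors $v_n := U^n \begin{bmatrix} A_0 \\ B_0 \end{bmatrix}$ and declare $A_n$, $B_n$ to be their entries, so that $v_n = \begin{bmatrix} A_n \\ B_n \end{bmatrix}$. This is consistent with the data given in the hypothesis: for $n=0$ it returns $\begin{bmatrix} A_0 \\ B_0 \end{bmatrix}$, and for $n=1$ it returns $U \begin{bmatrix} A_0 \\ B_0 \end{bmatrix} = \begin{bmatrix} A_1 \\ B_1 \end{bmatrix}$, exactly the vector prescribed in the statement.

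Next I would establish the column form of $M_n$. Writing $M_0$ by columns as $M_0 = \bigl[\, v_1 \mid v_0 \,\bigr]$, since its first column is $\begin{bmatrix} A_1 \\ B_1 \end{bmatrix} = v_1$ and its second is $\begin{bmatrix} A_0 \\ B_0 \end{bmatrix} = v_0$, left multiplication by $U^n$ acts on each column separately, so $M_n = U^n M_0 = \bigl[\, U^n v_1 \mid U^n v_0 \,\bigr]$. By the definition of $v_n$ the second column is $U^n v_0 = v_n$, while the first is $U^n v_1 = U^{n+1} v_0 = v_{n+1}$; in entries this is precisely $M_n = \begin{bmatrix} A_{n+1} & A_n \\ B_{n+1} & B_n \end{bmatrix}$, as claimed.

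For the recurrence relation \eqref{RecurAB} I would appeal to the Cayley--Hamilton theorem. The characteristic polynomial of the $2 \times 2$ matrix $U$ is $\lambda^2 - (\tr U)\lambda + \det U$, so $U^2 = (\tr U)\,U - (\det U)\,I$. Multiplying this identity on the right by $v_{n-1}$ and using $v_{n+1} = U^2 v_{n-1}$ together with $U v_{n-1} = v_n$ yields $v_{n+1} = (\tr U)\,v_n - (\det U)\, v_{n-1}$. Reading off the two entries of this vector identity gives the two scalar recurrences for $A_n$ and $B_n$ at once (valid for $n\geq1$).

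I do not expect a genuine obstacle here; the only point requiring care is the bookkeeping of the column indexing---making sure that the shift $U^n v_1 = v_{n+1}$ lines up the first column of $M_n$ with $A_{n+1}, B_{n+1}$ rather than with $A_n, B_n$---and confirming that the recursive definition of $v_n$ is compatible with the values at $n=0,1$ explicitly prescribed in the hypothesis. Everything else is immediate from the linearity of matrix multiplication in the columns and from Cayley--Hamilton.
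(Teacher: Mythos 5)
Your proof is correct. For the first claim your route coincides in substance with the paper's: the paper defines the sequences by the recursion $\begin{bmatrix}A_{n+1}\\B_{n+1}\end{bmatrix}=U\begin{bmatrix}A_n\\B_n\end{bmatrix}$ and shows by induction that the matrices $R_n=\begin{bmatrix}A_{n+1}&A_n\\B_{n+1}&B_n\end{bmatrix}$ satisfy $R_n=UR_{n-1}$, hence coincide with $M_n$; your closed form $v_n=U^nv_0$ together with the observation that $U^n$ acts columnwise on $M_0=[\,v_1\mid v_0\,]$ is the same argument with the induction absorbed into the notation. Where you genuinely diverge is the recurrence \eqref{RecurAB}: the paper derives it by entrywise elimination, expanding $A_{n+1}=\alpha A_n+\beta B_n$ and substituting $\beta B_{n-1}=A_n-\alpha A_{n-1}$ --- a computation it writes with a division by $\beta$ (harmless in substance, since the substitution itself requires no division, but as written it needs a remark when $\beta=0$) --- and then declares the $B_n$ case ``analogous''. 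Your appeal to Cayley--Hamilton, $U^2=(\tr U)\,U-(\det U)\,I$ applied to $v_{n-1}$, yields both scalar recurrences in one stroke, with no case distinction, no division, and no second ``analogous'' computation. The paper's elimination buys elementarity, using nothing beyond the definition of the matrix product; yours buys uniformity and makes transparent why the trace and determinant appear at all --- they are the coefficients of the characteristic polynomial of $U$. Both arguments are complete.
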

\begin{proof} Given $A_0$, $B_0$ define $A_n$, $B_n$ by the recurrence relations:
\begin{equation}\label{Recur1AB}
A_{n+1}=\alpha\,A_n+\beta\,B_{n}\text{ and } B_{n+1}=\gamma\,A_n+\delta\,B_{n}\,. 
\end{equation}
Note that this is consistent with the choice of values for $A_1$, $B_1$, and can be rewritten as $\begin{bmatrix}A_{n+1}\\B_{n+1}\end{bmatrix}=U\begin{bmatrix}A_n\\B_n\end{bmatrix}$. Since the latter implies 
$\begin{bmatrix}A_{n}\\B_{n}\end{bmatrix}=U\begin{bmatrix}A_{n-1}\\B_{n-1}\end{bmatrix}$ we have 
$$
R_n:=\begin{bmatrix}A_{n+1}&A_{n}\\B_{n+1}&B_{n}\end{bmatrix}=U\begin{bmatrix}A_n&A_{n-1}\\B_n&B_{n-1}\end{bmatrix}
=UR_{n-1}\,.
$$
But by the definitions, $R_0=M_0$, and $M_n=UM_{n-1}$, therefore $R_n=M_n$ for all $n\geq0$.

From the recurrence relations \eqref{Recur1AB} we have:
\begin{multline}
A_{n+1}=\alpha\,A_n+\beta\,B_{n}=\alpha\,A_n+\beta\,(\gamma\,A_{n-1}+\delta\,B_{n-1})\\
=\alpha\,A_n+\beta\,\gamma\,A_{n-1}+\beta\,\delta\,\frac{A_n-\alpha\,A_{n-1}}{\beta}
=(\alpha+\delta)\,A_n-(\alpha\delta-\beta\,\gamma)\,A_{n-1}\\
=(\tr U)\,A_n-(\det U)\,A_{n-1}\,.\notag
\end{multline}
The case of $B_n$ is analogous.
\end{proof}

Let us denote $\mu:=\det M_0$, then by definition $\det M_n=\mu\,(\det U)^n$, and by Theorem \ref{Mn}:
\begin{equation}\label{mu}
\mu=\det M_0=A_1B_0-A_0B_1=(\alpha-\delta)A_0B_0+\beta B_0^2-\gamma A_0^2\,.
\end{equation}
We could impose an additional constraint to have $\mu=\pm1$, but this is not really necessary for the error correction purposes. If $\det P$ is sent as the check number along with $C$ what matters is that we can independently recover $\det P$ from 
$\det C$, not that it necessarily be equal to it up to sign. Since the encryption scheme is $C=P\,M_n$ we have 
$\det C=(\pm1)^n\mu\det P$, and the recovery is possible for any $\mu$. 

At this point one may be tempted to generalize further by taking 
$M_n=\begin{bmatrix} A_{n+1} & A_n\\B_{n+1} & B_n\end{bmatrix}$ with independently chosen $A_0$, $B_0$, $A_1$ and $B_1$, and 
$A_n$, $B_n$ computed according to \eqref{RecurAB} with $t=\tr U$ and $d=\det U$ not necessarily related to any matrix $U$:
\begin{gather}\label{Recurtd}
A_{n+1}=t\,A_n-d\,A_{n-1}\\
B_{n+1}=t\,B_n-d\,B_{n-1}\,.\notag
\end{gather}
We may not have $M_n=U^nM_0$ anymore, but \eqref{Recurtd} can be rewritten in the matrix form as
$$
\begin{bmatrix}A_{n+1}&A_{n}\\B_{n+1}&B_{n}\end{bmatrix}=\begin{bmatrix}A_n&A_{n-1}\\B_n&B_{n-1}\end{bmatrix}
\begin{bmatrix}t&1\\-d&0\end{bmatrix}\,.
$$
Set $S:=\begin{bmatrix}t&1\\-d&0\end{bmatrix}$, and note that $M_1=M_0\,S$ is automatically satisfied for {\it any} choice of 
$A_0$, $B_0$, $A_1$ and $B_1$, so by induction $M_n=M_0S^n$. 

This is not, however, a real generalization, but rather an alternative representation of $M_n$. Indeed, with the matrix $U$ we could choose six parameters, the four entries of $U$ and $A_0$, $B_0$, while $A_1$ and $B_1$ were then determined by them. With the matrix $S$ we get to choose all four entries of $M_0$, but in turn it only has two parameters in it, $t$ and $d$, which still total six. In other words, we simply traded the diversity of $U$ for the freedom in the choice of $M_0$.

The main reason the golden and the $k$-golden cryptographies are vulnerable to chosen plaintext attacks \cite{ReySan,Tah} is that the system has too few free parameters, $n$ and $n,k$ respectively. As shown in \cite{Tah}, the more parameters are introduced the harder it is to recover them from equations obtained by choosing special plaintexts. The authors of \cite{Tah} inject some extra parameters by introducing an external hash function, and ask for a more intrinsic way of doing so (other suggestions were made in \cite{Moh,Sud}). We believe that the $M_n$ coding matrices provide just that. For cryptographic intents and purposes, $M_n$ share the first two properties of the golden matrices listed at the beginning of this section, with the added benefit of parameter choice freedom. In the next section we will show that they share the third property as well.

\section{Unimodular ratio}\label{UniRatio}

By Theorem \ref{Mn}, $A_n$, $B_n$ satisfy a second order recurrence relation $A_{n+1}=tA_n-dA_{n-1}$, where $t=\tr U$ and 
$d=\det U$. Since we are interested in convergence of the ratios let us rewrite it in terms of them: 
$$
\frac{A_{n+1}}{A_n}=t-\,\frac{d}{\frac{A_n}{A_{n-1}}}\,.
$$
Assuming that $\frac{A_{n+1}}{A_n}\xrightarrow[n\to\infty]{}\varphi$ we have that $\varphi$ satisfies $x=t-\,\frac{d}{x}$, or 
\begin{equation}\label{UniRatEq}
x^2-tx+d=0\,.
\end{equation}
The obvious problem is that this is a quadratic equation which generically has two different roots, and the ratios can not converge to both at once. Of course, this was already the situation with the Fibonacci numbers, but there only one of the roots was positive, and since the Fibonacci numbers are positive this is the one to which their ratios converged. But for $d=\det U>0$ equation \eqref{UniRatEq} {\it may} have two positive roots. 

Let $f(x):=t-\,\frac{d}{x}$, and $a_n:=\frac{A_{n+1}}{A_n}$, then the ratio recurrence becomes $a_{n+1}=t+\,\frac{d}{a_n}$, and 
the roots of \eqref{UniRatEq},
\begin{equation}\label{Pfi+-}
\varphi_\pm:=\frac{t\pm\sqrt{t^2-4d}}{2},
\end{equation}
are the fixed points of $f$. Clearly, if $a_n$ converge at all the limit would have to be a fixed point, i.e. $\varphi_+$ or 
$\varphi_-$. If $d>0$ they are real only for $|t|\geq2\sqrt{d}$, and then both positive. Otherwise $f$ has no real fixed points, and the ratios diverge. We will write $\nearrow$ and $\searrow$ to denote monotone convergence, increasing and decreasing, respectively.
\begin{theorem}\label{Convd+} Let $a_{n+1}=f(a_n)=t-\,\frac{d}{a_n}$ with $d>0$ and $t\geq2d$.\\
{\rm(i)\,} If $a_0\geq\varphi_+$ then $a_n\searrow\varphi_+$.\\
{\rm(ii)} If $\varphi_-<a_0<\varphi_+$ then $a_n\nearrow\varphi_+$.

In particular, for any $a_0>\varphi_-$ we have $a_n\to\varphi_+$.
\end{theorem}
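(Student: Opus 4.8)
The plan is to treat the recurrence as a discrete dynamical system generated by the increasing map $f(x)=t-\tfrac{d}{x}$ and to prove convergence via the monotone convergence theorem, after showing that each of the two regions in (i) and (ii) is forward-invariant and that $f$ drives the iterates monotonically toward $\varphi_+$. The whole argument rests on two elementary facts about $f$ on $(0,\infty)$ together with one factorization.

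First I would record the basic features of $f$. Differentiating gives $f'(x)=d/x^2>0$, so $f$ is strictly increasing on $(0,\infty)$. The fixed points $\varphi_\pm$ of \eqref{Pfi+-} are real by hypothesis and both positive, since $\varphi_+\varphi_-=d>0$ and $\varphi_++\varphi_-=t>0$; thus $0<\varphi_-\le\varphi_+$. The computation that drives everything is the factorization
\[
f(x)-x=t-\frac{d}{x}-x=-\frac{x^2-tx+d}{x}=-\frac{(x-\varphi_+)(x-\varphi_-)}{x},
\]
valid for $x>0$, which reads off the sign of $f(x)-x$ directly from the position of $x$ relative to the fixed points.

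Next I would combine monotonicity of $f$ with this sign information to get invariance and monotonicity of the sequence. For (i), if $x\ge\varphi_+$ then $f(x)\ge f(\varphi_+)=\varphi_+$ because $f$ is increasing, while the factorization gives $f(x)\le x$; hence $\varphi_+\le f(x)\le x$, the ray $[\varphi_+,\infty)$ is forward-invariant, and $a_n$ is nonincreasing and bounded below by $\varphi_+$. For (ii), if $\varphi_-<x<\varphi_+$ then $\varphi_-=f(\varphi_-)<f(x)<f(\varphi_+)=\varphi_+$ together with $f(x)>x$; hence $(\varphi_-,\varphi_+)$ is forward-invariant and $a_n$ is increasing and bounded above by $\varphi_+$. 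In both cases the iterates stay positive by invariance, so $f$ never blows up along the orbit.

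Finally, the monotone convergence theorem yields a limit $L$ in each case, and continuity of $f$ on $(0,\infty)$ forces $L$ to be a fixed point, i.e. $L\in\{\varphi_-,\varphi_+\}$. In case (i), $L\ge\varphi_+$ forces $L=\varphi_+$; in case (ii) the sequence is strictly increasing and stays above $\varphi_-$, so $L>\varphi_-$ again forces $L=\varphi_+$. The concluding ``in particular'' statement is then immediate, since any $a_0>\varphi_-$ falls into exactly one of (i) or (ii). The step demanding the most care is the invariance claim: one must be sure the orbit never escapes $(\varphi_-,\infty)$ into the region where $f$ is negative or singular near $0$, and must rule out the spurious limit $\varphi_-$ in case (ii). Once invariance and strict monotonicity are established, the remainder is the routine monotone-bounded convergence argument.
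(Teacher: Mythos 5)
Your proof is correct and takes essentially the same approach as the paper's: monotonicity of $f$ on $(0,\infty)$ combined with the factorization $f(x)-x=-\tfrac{1}{x}(x-\varphi_-)(x-\varphi_+)$ to obtain a monotone sequence trapped on the appropriate side of $\varphi_+$, followed by the monotone convergence theorem and identification of the limit as a fixed point. The only differences are presentational: you write out case (ii) and the exclusion of the spurious limit $\varphi_-$ explicitly, where the paper simply declares them ``analogous.''
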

\begin{proof}  
Note that for $x>0$ the function $f(x)$ is monotone increasing. Therefore, 
for $a_0\geq\varphi_+$ we have $a_{1}=f(a_0)\geq f(\varphi_+)=\varphi_+$, and 
$$
a_{1}-a_0=t-\,\frac{d}{a_0}-a_0=-\frac1{a_0}(a_0^2-ta_0+d)=-\frac1{a_0}(a_0-\varphi_-)(a_0-\varphi_+)\leq0\,.
$$
Taking $a_1$ as the new $a_0$, and so on, we get by induction $a_0\geq a_1\geq a_2\geq\dots\geq\varphi_+$. Thus, $a_n$ is monotone decreasing and bounded below by $\varphi_+$, therefore it converges. The limit is a fixed point of $f$, and therefore $\varphi_+$ since $\varphi_-<\varphi_+$. This concludes the proof of {\rm(i)}. The proof of {\rm(ii)} is analogous.
\end{proof}
\begin{theorem}\label{Convd-} Let $a_{n+1}=f(a_n)=t-\,\frac{d}{a_n}$ with $d<0$ and $t>0$.\\
{\rm(i)\,} If $a_0\geq\varphi_+$ then $a_{2k}\searrow\varphi_+$, $a_{2k+1}\nearrow\varphi_+$.\\
{\rm(ii)} If $0<a_0\leq\varphi_+$ then $a_{2k}\nearrow\varphi_+$, $a_{2k+1}\searrow\varphi_+$.

In particular, for any $a_0>0$ we have $a_n\to\varphi_+$.
\end{theorem}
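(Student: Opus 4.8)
The plan is to reduce the problem to the monotone situation already settled in Theorem~\ref{Convd+}. Here $d<0$ forces $t^2-4d=t^2+4|d|>t^2$, so the roots \eqref{Pfi+-} are real with $\varphi_-<0<t<\varphi_+$; in particular $\varphi_+$ is the only positive fixed point of $f$. Moreover $f'(x)=d/x^2<0$, so $f$ is \emph{decreasing} on $(0,\infty)$, and since $f(x)=t+|d|/x>t$ there, the orbit stays in $(t,\infty)\subset(0,\infty)$ as soon as $a_0>0$. A decreasing map sends points above $\varphi_+$ to points below it and conversely, which is exactly why the even and odd terms separate. The natural device is therefore the second iterate $g:=f\circ f$, which is \emph{increasing}, and to which the argument of Theorem~\ref{Convd+} applies once we locate its fixed points.

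First I would carry out the one genuine computation, namely $g$ and the sign of $g(x)-x$. A direct substitution gives
$$
g(x)=f\!\left(t-\frac{d}{x}\right)=\frac{(t^2-d)x-td}{tx-d},
$$
and hence
$$
g(x)-x=\frac{-t\,(x^2-tx+d)}{tx-d}=\frac{-t\,(x-\varphi_+)(x-\varphi_-)}{tx-d}.
$$
For $x>0$ we have $t>0$ and $tx-d=tx+|d|>0$, while $x-\varphi_->0$ because $\varphi_-<0$; thus $\operatorname{sign}\bigl(g(x)-x\bigr)=-\operatorname{sign}(x-\varphi_+)$. In words, $g(x)>x$ for $0<x<\varphi_+$, $g(x)<x$ for $x>\varphi_+$, and $\varphi_+$ is the \emph{unique} fixed point of $g$ in $(0,\infty)$ (its only other fixed point is $\varphi_-<0$). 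This is precisely the configuration of Theorem~\ref{Convd+} for the increasing map $g$.

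With this in hand the conclusion is immediate. In case (i), $a_0\geq\varphi_+$: applying the increasing map $g$ together with the sign relation, the even subsequence $b_k:=a_{2k}$ obeys $\varphi_+\leq b_{k+1}=g(b_k)\leq b_k$, so $b_k\searrow L$ with $g(L)=L$, forcing $L=\varphi_+$; meanwhile $a_1=f(a_0)\leq f(\varphi_+)=\varphi_+$ with $a_1>0$, and the odd subsequence $c_k:=a_{2k+1}$ satisfies $c_{k+1}=g(c_k)$ with $c_k\leq c_{k+1}\leq\varphi_+$, so $c_k\nearrow\varphi_+$. Case (ii) is the mirror image: for $0<a_0\leq\varphi_+$ the even terms increase to $\varphi_+$, while $a_1=f(a_0)\geq\varphi_+$ starts the odd terms off decreasing to $\varphi_+$. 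Since both subsequences tend to $\varphi_+$, so does $a_n$, which is the final assertion.

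The computation of $g$ and its factorization is the only place where effort is needed, and it is also what rules out spurious $2$-cycles: the identity $g(x)-x=\frac{-t(x-\varphi_+)(x-\varphi_-)}{tx-d}$ shows that the fixed points of $g$ are exactly those of $f$, so no period-two orbit can trap the even or odd subsequence away from $\varphi_+$. I would flag this as the main (conceptual) obstacle; once it is verified, everything else follows the template of Theorem~\ref{Convd+}.
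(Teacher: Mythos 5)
Your proof is correct and takes essentially the same route as the paper's: both hinge on the second-iterate displacement (your $g(x)-x$, the paper's $a_2-a_0$) factored as $\frac{-t\,(x-\varphi_+)(x-\varphi_-)}{tx-d}$, followed by monotone bounded convergence of the even and odd subsequences. If anything, your write-up is slightly tighter than the paper's: your denominator $tx-d$ is valid for general $d<0$ (the paper's $1+ta_0$ silently assumes $d=-1$), and your explicit remark that the fixed points of $g=f\circ f$ in $(0,\infty)$ coincide with those of $f$ closes the small gap in the paper's assertion that the subsequence limit ``is a fixed point of $f$,'' when a priori it is only a fixed point of $g$.
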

\begin{proof}  
This time for $x>0$ the function $f(x)$ is monotone decreasing, so if $a_0\geq\varphi_+$ then $a_{1}=f(a_0)\leq f(\varphi_+)=\varphi_+$, and 
\begin{multline*}
a_{2}-a_0=t-\,\frac{d}{t-\,\frac{d}{a_0}}-a_0=-\frac{t}{1+ta_0}(a_0^2-ta_0+d)\\
=-\frac{t}{1+ta_0}(a_0-\varphi_-)(a_0-\varphi_+)\leq0\,.
\end{multline*}
On the other hand, if $0<a_0\leq\varphi_+$ by the same reasoning $a_{1}\geq\varphi_+$, and $a_{2}\geq a_0$.

Consider $a_0\geq\varphi_+$ again. Taking $a_1$ as the new $a_0$ we get $a_2\geq\varphi_+$, and so on. By induction, $a_0\geq a_2\geq a_4\geq\dots\geq\varphi_+$. Thus, $a_{2k}$ is monotone decreasing and bounded below by $\varphi_+$, therefore it converges. The limit is a fixed point of $f$, and therefore $\varphi_+$ since $\varphi_-<\varphi_+$. The case of $a_{2k+1}$, and the proof of {\rm(ii)} are analogous.
\end{proof}
As a matter of fact, one can show that in both cases $a_n\to\varphi_+$ for any $a_0\neq0,\varphi_-$, essentially because 
$\varphi_-$ is a repulsive fixed point. But the convergence may not be monotone since for $a_0$ close to $\varphi_-$ the sequence may spend a long time in its vicinity before getting to approach $\varphi_+$. In our conditions (with $t>2\sqrt{d}$ for $d>0$) one can show that the convergence is exponential, i.e. $a_n$ approaches $\varphi_+$ faster than some geometric sequence approaches $0$. From now on we drop $+$ from the notation and call
\begin{equation}\label{Pfi}
\varphi:=\frac{t+\sqrt{t^2-4d}}{2},
\end{equation}
the {\it unimodular ratio}.
\begin{corollary}\label{LimRat} Suppose the entries of $U$ are non-negative, and $A_0,B_0\geq1$. Then for $d=1$ we have 
$\frac{A_{n+1}}{A_n},\frac{B_{n+1}}{B_n}\xrightarrow[n\to\infty]{}\varphi$, and the convergence is exponential. The same holds for $d=-1$ if additionally 
$\tr U>2$ and $\alpha,\delta\geq1$.
\end{corollary}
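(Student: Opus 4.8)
The plan is to reduce the statement to the convergence results already established in Theorems \ref{Convd+} and \ref{Convd-}, so that the real work is (a) verifying their hypotheses from the nonnegativity assumptions and (b) upgrading the qualitative convergence they give to an exponential rate. Throughout I would keep in mind, from Theorem \ref{Mn}, that $A_n$, $B_n$ satisfy the second order recurrence with $t=\tr U$ and $d=\det U$, so that the ratios $a_n:=A_{n+1}/A_n$ and $b_n:=B_{n+1}/B_n$ obey $a_{n+1}=f(a_n)$, $b_{n+1}=f(b_n)$ with $f(x)=t-d/x$, \emph{provided} no $A_n$ or $B_n$ vanishes.

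First I would establish positivity, which is what makes the ratios well defined. Writing $A_{n+1}=\alpha A_n+\beta B_n$ and $B_{n+1}=\gamma A_n+\delta B_n$, a joint induction from $A_0,B_0\geq1$ keeps $A_n,B_n>0$ as soon as $\alpha,\delta>0$: then $A_{n+1}\geq\alpha A_n>0$ and $B_{n+1}\geq\delta B_n>0$. For $d=-1$ the added hypothesis $\alpha,\delta\geq1$ gives this immediately. For $d=1$ it is automatic: nonnegativity together with $\det U=\alpha\delta-\beta\gamma=1$ forces $\alpha\delta=1+\beta\gamma\geq1$, hence $\alpha,\delta>0$.

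Next I would check the basin-of-attraction hypotheses. For $d=1$ the same bound $\alpha\delta\geq1$ yields, by AM--GM, $t=\alpha+\delta\geq2\sqrt{\alpha\delta}\geq2=2\sqrt d$, so $\varphi_\pm$ are real; moreover $\alpha\geq\varphi_-$, since after clearing the radical this inequality reduces to exactly $\alpha\delta\geq1$. Hence $a_0=\alpha+\beta B_0/A_0\geq\alpha\geq\varphi_-$, and likewise $b_0\geq\varphi_-$, so Theorem \ref{Convd+} applies and $a_n,b_n\to\varphi_+=\varphi$. For $d=-1$ we have $\varphi_-<0<a_0,b_0$ and $t>2>0$, so Theorem \ref{Convd-} applies and again $a_n,b_n\to\varphi_+=\varphi$.

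Finally, for the exponential rate I would discard monotonicity in favor of the exact linearization at the fixed point: since $\varphi_+\varphi_-=d$ we have $d/\varphi=\varphi_-$, and subtracting $\varphi=f(\varphi)$ from $a_{n+1}=f(a_n)$ gives the clean identity
$$a_{n+1}-\varphi=\frac{\varphi_-}{a_n}\,(a_n-\varphi).$$
Because $\varphi_+\varphi_-=d=\pm1$ forces $|\varphi_-|=1/\varphi<\varphi$ whenever $\varphi>1$ (i.e. whenever the roots are distinct, $t>2\sqrt d$, which is automatic for $d=-1$), and since $a_n\to\varphi$, the factor $|\varphi_-|/a_n$ is eventually bounded by some $q<1$; telescoping then gives $|a_n-\varphi|=O(q^n)$, and the same for $b_n$. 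I expect the delicate point to be step three rather than the rate: one must confirm that $a_0$ sits \emph{strictly} above the repulsive fixed point $\varphi_-$ (equivalently, that the $\varphi_-$-component of the closed-form solution is nonzero), since landing exactly on $\varphi_-$ would pin the ratio there and destroy convergence to $\varphi$. The inequality $\alpha\geq\varphi_-$ extracted from $\det U=1$, reinforced by the strictly positive term $\beta B_0/A_0$, is what secures this.
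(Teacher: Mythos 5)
Your overall route is the same as the paper's: reduce the corollary to Theorems \ref{Convd+} and \ref{Convd-} by verifying their hypotheses. The paper's entire proof, however, is the one-sentence claim that the hypotheses force $\frac{A_1}{A_0},\frac{B_1}{B_0}>1>\varphi_-$; your verification differs in its details and is far more explicit: you prove positivity of $A_n,B_n$ by induction (which the paper leaves tacit), you get $t\geq 2\sqrt{d}$ from $\alpha\delta=1+\beta\gamma\geq1$ and AM--GM, and you bound $a_0\geq\alpha\geq\varphi_-$ by clearing the radical, rather than via the paper's chain through $1$. More substantially, you actually prove the exponential rate: the paper never does this anywhere (after Theorem \ref{Convd-} it merely asserts that ``one can show'' exponential convergence), whereas your linearization identity $a_{n+1}-\varphi=\frac{\varphi_-}{a_n}(a_n-\varphi)$, using $\varphi_+\varphi_-=d$ and the fact that the factor $|\varphi_-|/a_n$ is eventually below some $q<1$ when the roots are distinct, is a correct, self-contained argument. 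That part is a genuine improvement over what the paper offers.

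The weak point is exactly the one you flag yourself: the theorems need $a_0$ \emph{strictly} above $\varphi_-$, and your justification calls $\beta B_0/A_0$ ``strictly positive'' when the hypotheses only give $\beta\geq0$ (and similarly $\gamma\geq0$ for the $B$-ratios). If $\beta=0$ this breaks, and not removably so: with real non-negative entries, $U=\bigl[\begin{smallmatrix}1/2&0\\0&2\end{smallmatrix}\bigr]$ has $d=1$ but $a_n=1/2=\varphi_-$ for all $n$, so the corollary as literally stated fails; one needs integrality of the entries, under which $\beta=0$ and $d=1$ force $\alpha=\delta=1$, hence $a_0=1=\varphi_+$ and case (i) of Theorem \ref{Convd+} still applies. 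Likewise the exponential claim genuinely fails at $t=2$, $d=1$ (for $U=\bigl[\begin{smallmatrix}1&1\\0&1\end{smallmatrix}\bigr]$ the ratios approach $\varphi=1$ like $1/n$), which is why your restriction to distinct roots, $t>2\sqrt{d}$, cannot be dropped. None of this puts you behind the paper: its own claimed inequality $\frac{A_1}{A_0}>1$ fails just as badly for $\alpha=1$, $\beta=0$, and its exponential-rate assertion silently needs $t>2\sqrt{d}$. In short, your proof is correct under the same tacit reading of the hypotheses that the paper's proof requires, and it is both more detailed and more honest about where the fragility lies.
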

\begin{proof}  
The conditions ensure that Theorems \ref{Convd+}, \ref{Convd-} apply, in particular that $\frac{A_{1}}{A_0},\frac{B_{1}}{B_0}>1>\varphi_-$.
\end{proof}

\section{Error correction}\label{ErrCor}

Encryption and decryption in the unimodular cryptography is completely analogous to that in the golden cryptography. Namely, 
the plaintext is divided into blocks of four arranged into a matrix, the encryption is given by $C=P\,M_n$, and
the decryption by $P=C\,M_n^{-1}$. The same goes for the error correction described in \cite[11.5]{StHar}. 
This is based on the determinant formula $\det M_n=\mu d^n$, and the {\it unimodular ratio checking relations} analogous to the ones of the golden cryptography: $c_{11}\approx\varphi c_{12}$, $c_{21}\approx\varphi c_{22}$. The proof is analogous to the golden case (given convergence of the ratios proved above), see \eqref{CheckIneq}, and we omit it. 

Let us briefly outline Stakhov's error correction procedures. There are four possible locations for a single error, represented by variables in the following matrices:
$$ 
\begin{bmatrix}x&c_{12}\\c_{21}&c_{22}\end{bmatrix}, \hspace{24pt} \begin{bmatrix}c_{11}&y\\c_{21}&c_{22}\end{bmatrix}, \hspace{24pt} \begin{bmatrix}c_{11}&c_{12}\\z&c_{22}\end{bmatrix},\hspace{12pt} \textrm{and} \hspace{12pt} \begin{bmatrix}c_{11}&c_{12}\\c_{21}&v\end{bmatrix}.
$$
The most effective method of single error correction involves using the unimodular ratio checking relations first. The receiver can determine in which row the single error is located by determining which row does not satisfy the checking relation. Next, the two possible incorrect elements can be estimated by assuming the other one to be correct. Finally, the correct solution will satisfy the expected determinant equation:
\begin{align*}
xc_{22}-c_{12}c_{21}&=\mu d^n\det P\\
c_{11}c_{22}-yc_{21}&=\mu d^n\det P\\
c_{11}c_{22}-c_{12}z&=\mu d^n\det P\\
c_{11}v-c_{12}c_{21}&=\mu d^n\det P.
\end{align*}
The following example demonstrates how a single error is detected and corrected.
\vspace{0.5cm}

\begin{example}\label{SinglCor}
For encryption we will use the following matrices:
$$
U := \begin{bmatrix}2 & 1 \\ 1 & 1\end{bmatrix}\text{ and } M_0=\begin{bmatrix}1 & 0 \\ 1 & 1\end{bmatrix}\,.  
$$
The unimodular matrix $U$ is known as the Arnold's cat matrix \cite{Mish} (taken $\mod{1}$ it generates a chaotic map discovered by Arnold in 1960-s, who illustrated its action on an image of a cat). We have $t=3$, $d=1$, and 
$$
\varphi=\frac{3+\sqrt{5}}{2}=1+\tau\approx2.618...\,.
$$
Suppose the following ciphertext matrix, encrypted with $M_n=U^nM_0$ with $n=4$, was received,
$$
C=\begin{bmatrix} 770 & 494 \\ 1846 & 705 \end{bmatrix},
$$
along with the determinant check number, $126$. We expect the determinant of the ciphertext matrix to be $126$. However $C$ has the determinant of $-369074,$ thus there must be an error in it. In order to determine where the error is located, we compute the row ratios:
\begin{align*}
\frac{770}{494} &\approx 1.559\,, \\
\frac{1846}{705} &\approx 2.618.
\end{align*}
We see that the top row of the ciphertext matrix does not satisfy the checking relation, but the bottom row does. We assume that there is a single error in the top row and the bottom row is correct. First, we replace the element $c_{11}=770$ by the variable $x$ and find an estimate for $x$ with the unimodular checking relation:
$$
x \approx 2.618\cdot 494 \approx 1293.
$$
If we let $c_{11}=x$ and check the determinant of the new matrix, we find that the determinant equals $-359\neq126$, so
$c_{11}\neq1293$. Next, we replace the element $c_{12}=494$ by the variable $y$ and estimate $y$ with the unimodular checking relation:
$$
y \approx \frac{770}{2.618} \approx 294.
$$
Now the determinant equals 126, which means $c_{12}=294$. The corrected matrix is
$$
C=\begin{bmatrix} 770 & 294 \\ 1846 & 705 \end{bmatrix}.
$$
\end{example}\vspace{0.5cm}

For double error correction it is crucial that the check number $\det P$ be transmitted correctly. The determinant of the ciphertext matrix is equated to $\mu d^n\det P$, and the two elements assumed incorrect are solved for by applying the unimodular ratio checking relations. Double errors fall into three categories: diagonal, column and row errors.
There are two types of the diagonal double errors in the ciphertext matrix, the diagonal error and the anti-diagonal error, respectively
$$
\begin{bmatrix}x&c_{12}\\c_{21}&v\end{bmatrix} \hspace{24pt}\textrm{and}\hspace{24pt} 
\begin{bmatrix}c_{11}&y\\z&c_{22}\end{bmatrix}.
$$
Both cases yield a factoring problem:
\begin{align*}
xv &= c_{12}c_{21} + \mu d^n\det P \text{\ \ \ \ and}\\
yz &= c_{11}c_{22} - \mu d^n\det P.
\end{align*}
The right hand side of each equation is known, thus the correct solution is one of its factor pairs. If $n$ was chosen to be sufficiently large this factoring problem can be rather challenging, as the number to be factored should be quite large. But with high probability only the correct solution satisfies the unimodular ratio checking relations. This simplifies the factoring problem considerably. For the errors along the diagonal the solutions are approximated by 
$$
x \approx \varphi c_{12} \textrm{ and } v \approx \frac{c_{21}}{\varphi}\,;
$$
and for the errors along the anti-diagonal by
$$ 
y \approx \frac{c_{11}}{\varphi} \textrm{ and } z \approx \varphi c_{22}.
$$
With these estimates the receiver only needs to try dividing the product by numbers close to them, as opposed to finding all factor pairs. If there are no factor pairs around the estimated values, a different double error should be assumed.

There are two types of double column errors in the ciphertext matrix, 
$$
\begin{bmatrix}x&c_{12}\\z&c_{22}\end{bmatrix} \hspace{24pt}\textrm{and}\hspace{24pt}  
\begin{bmatrix}c_{11}&y\\c_{21}&v\end{bmatrix},
$$
which lead to solving the following linear Diophantine equations, respectively
\begin{align*}
xc_{22} - c_{12}z &= \mu d^n\det P \text{\ \ \ \ and}\\
c_{11}v - yc_{21} &= \mu d^n\det P.
\end{align*}
There are infinitely many solution pairs that would satisfy them. The desired solution will also satisfy the unimodular ratio checking relations. The receiver can estimate the solutions as
\begin{align*}
x \approx \varphi c_{12} \textrm{ and } z \approx \varphi c_{22}\,;\\
y \approx \frac{c_{11}}{\varphi} \textrm{ and } v\approx \frac{c_{21}}{\varphi}.
\end{align*}
Again, one can search for a solution pair to the Diophantine equations near the estimates. If there are no Diophantine solutions near the estimates, a different type of error should be assumed.

\section{Row Errors}\label{RowErr}

It may seem that the correction of row errors should be analogous to the correction of column errors, but the symmetry is broken by the fact that the unimodular ratio checking relations relate entries in a row, not in a column. The issue arises already in the golden cryptography, but it is overlooked in \cite[11.5]{StHar}. As we will see, without additional information double row errors can not be corrected. There are two cases of double row errors in the ciphertext matrix, 
$$
\begin{bmatrix}x&y\\c_{21}&c_{22}\end{bmatrix}\hspace{24pt}\textrm{and}\hspace{24pt}
\begin{bmatrix}c_{11}&c_{12}\\z&v\end{bmatrix},
$$ 
which lead to the following linear Diophantine equations, respectively
\begin{align}\label{RowEq}
xc_{22} - yc_{21} &= \mu d^n\det P  \text{\ \ \ \ and}\notag\\
c_{11}v - c_{12}z &= \mu d^n\det P.
\end{align}
With the diagonal and column double errors we had at least one correct element in each row, which allowed us to identify the correct solution more or less uniquely. But if both elements in a row are faulty, the checking relations generally do not narrow down the available choices sufficiently. Indeed, in the first case, say, we have 
$$
\left|\frac{x}{y}-\frac{c_{21}}{c_{22}}\right|=\mu d^n\,\frac{\det P}{yc_{22}}.
$$ 
Suppose $|\mu|=|d|=1$, as for the $Q$ matrix or the Arnold's cat matrix from Example \ref{SinglCor}, and $n$ is large. Then  the entries of $C$ will be much larger than the entries of $P$, and therefore the ratio on the right will be small for {\it all} solutions to the Diophantine equation \eqref{RowEq}. But that means that the ratios of all solutions to the equation will be close to the unimodular ratio $\varphi$, and the corresponding checking relation is of no help in correcting the transmitted data.

\vspace{0.5cm}\begin{example}\label{qrowratioex}
Suppose the matrix $P=\begin{bmatrix}8 & 24\\19&2\end{bmatrix}$ is encrypted with the $Q$ matrix and the key $n=6$.  The deterimant of $P$ is $-440$.  If the top row of the ciphertext matrix contains errors,
$$
C=\begin{bmatrix}x&y\\263&162\end{bmatrix}\,,
$$
the Diophantine equation to solve is $162x-263y=-440$. The solutions are $x=263k+33$ and $y=162k+22$, with $k\in\mathbb{Z}$.  The following table displays low positive $k$ solutions to the Diophantine equations, and the ratios $\frac{x}{y}$ for each solution.
\begin{center}
\begin{tabular}{|c||c|c||c|}
\hline $k$ & $x$ & $y$ & $\frac{x}{y}$ \\ [0.5ex]
\hline 0 & 33 & 22 & 1.50000\\
\hline 1 & 296 & 184 & 1.60870 \\
\hline 2 & 559 & 349 & 1.61561 \\
\hline 3 & 822 & 508 & 1.61811 \\
\hline 4 & 1085 & 670 & 1.61940 \\
\hline 5 & 1348 & 832 & 1.62019\\ \hline
\end{tabular}
\end{center}
As expected, the ratios are all quite close to the golden ratio $1.618...$\,. The correct solution, based on the original encryption, has $k=1$. However, the ratio closest to the golden ratio occurs when $k=3$. This demonstrates that the row ratio checking relations can not correct the double row errors even in the original golden cryptography.
\end{example}\vspace{0.5cm}

Depending on the situation, the receiver may be able to rule out the unlikely solutions and pick the most likely one. Because we assumed that all plaintext elements were greater than or equal to zero, the negative solution pairs are ruled out.  Additionally, the sizes of the ciphertext elements depend on plaintext elements and the entries of the coding matrices. If the range for the plaintext elements is known, the range for the ciphertext elements can be found. However, this method of correction relies on all four plaintext entries being within a very small range. For example, if the plaintext entries represent letters of the alphabet, coded from $0$ to $25$, the receiver can determine the expected range of ciphertext entries and narrow down the acceptable Diophantine solutions. Still, too many solutions may be left, as the following example demonstrates.

\vspace{0.5cm}\begin{example}
Suppose the matrix $P=\begin{bmatrix}19&7\\2&10\end{bmatrix}$ is encrypted using the Arnold's cat matrix and the initial matrix $M_0=\begin{bmatrix}1&0\\1&1\end{bmatrix}$ with a key of $n=4$. The check number sent is 176. The receiver determined that the top row must contain a double error:
$$
C = \begin{bmatrix}x&y\\450&172\end{bmatrix}.
$$
The resulting Diophantine equation, $172x-450y=176$ has solutions $x=225k+158$ and $y=86k+60$, with $k\in\mathbb{Z}$.  Because the plaintext elements are known to range from $0$ to $25$ we have,
	\begin{align*}
	0 &\leq x \leq 25(A_{n+1}+B_{n+1}) \text{\ \ \ \ and}\\
	0 &\leq y \leq 25(A_n+B_n).
	\end{align*}
Thus the receiver can determine the following bounds on the possible solutions,
	\begin{align*}
	0 &\leq x \leq 2225  \text{\ \ \ \ and}\\
	0 &\leq y \leq 850.
	\end{align*}
Nonetheless, there are still ten values of $k$ which give Diophantine solutions $x$ and $y$ within this range. Moreover, narrowing the range of plaintext entries to from $0$ to $25$ is likely to compromise security.
\end{example}

\section{Column ratio}\label{ColRat}

Although we do not have ratios of column entries converge to the unimodular ratio, it turns out that the ratios in both columns converge to the same value, we call this common value the {\it column ratio}.
\begin{corollary}\label{LimColRat}
In conditions of Corollary \ref{LimRat} we have $\frac{c_{21}}{c_{11}} \approx \frac{c_{22}}{c_{12}}$ for large $n$.
\end{corollary}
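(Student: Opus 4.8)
The plan is to prove the two column ratios differ by a quantity that vanishes as $n\to\infty$, rather than to exhibit a common limit directly. The crucial observation is that the difference of the column ratios carries a determinant in its numerator. First I would write
$$
\frac{c_{21}}{c_{11}}-\frac{c_{22}}{c_{12}}=\frac{c_{12}c_{21}-c_{11}c_{22}}{c_{11}c_{12}}=\frac{-\det C}{c_{11}c_{12}}\,,
$$
and then substitute the determinant formula $\det C=\mu d^n\det P$ established earlier, obtaining
$$
\frac{c_{21}}{c_{11}}-\frac{c_{22}}{c_{12}}=\frac{-\mu d^n\det P}{c_{11}c_{12}}\,.
$$
The whole statement then reduces to showing that the right-hand side tends to $0$.

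To do this I would track the two factors separately. For the numerator, I would note that under the hypotheses of Corollary \ref{LimRat} we have $d=\pm1$, so $|\mu d^n\det P|=|\mu\det P|$ is constant in $n$. For the denominator, recall that the top row of $C=P\,M_n$ is $c_{11}=p_{11}A_{n+1}+p_{12}B_{n+1}$ and $c_{12}=p_{11}A_n+p_{12}B_n$. By Corollary \ref{LimRat} the ratios $A_{n+1}/A_n$ and $B_{n+1}/B_n$ converge to $\varphi$, with $\varphi>1$ in the exponential-convergence regime, and the positive sequences $A_n$, $B_n$ are increasing; hence both grow at least geometrically, like $\varphi^n$. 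Since the plaintext entries are non-negative, there is no cancellation, and as long as the top row of $P$ is not identically zero, $c_{11}$ and $c_{12}$ inherit this growth, so $c_{11}c_{12}$ grows like $\varphi^{2n}$. Putting the two facts together, a bounded numerator over a denominator growing like $\varphi^{2n}$ forces the difference to $0$, which is exactly the assertion $\frac{c_{21}}{c_{11}}\approx\frac{c_{22}}{c_{12}}$ for large $n$.

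The step I expect to be the main obstacle is the lower bound on the denominator: one must be sure that $c_{11}$ and $c_{12}$ genuinely diverge rather than cancel or vanish. This rests on the top row of $P$ being nonzero together with non-negativity of the entries, which I would state as a standing hypothesis (it holds in any genuine encryption). A slightly more robust alternative, which I would mention but not lead with, bypasses this issue by using the closed forms $A_n=a_+\varphi_+^{\,n}+a_-\varphi_-^{\,n}$ and $B_n=b_+\varphi_+^{\,n}+b_-\varphi_-^{\,n}$ from the two roots of $x^2-tx+d=0$: since $|\varphi_-|<\varphi_+$, both $A_n/B_n$ and $A_{n+1}/B_{n+1}$ converge to $a_+/b_+$, and each column ratio is the \emph{same} M\"obius function of these ratios, so both converge to one common value. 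I expect the determinant argument to be shorter and more transparent, so I would present it as the main proof.
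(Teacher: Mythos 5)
Your proof is correct, but it takes a genuinely different route from the paper's. The paper never forms the difference of the two ratios: it writes each column ratio of $C=P\,M_n$ as the same fractional-linear function $u\mapsto\frac{up_{21}+p_{22}}{up_{11}+p_{12}}$ evaluated at $u=\frac{A_{n+1}}{B_{n+1}}$ and $u=\frac{A_n}{B_n}$ respectively, and then notes that these two arguments are nearly equal because $\frac{A_{n+1}}{A_n}$ and $\frac{B_{n+1}}{B_n}$ converge to the common limit $\varphi$ by Corollary \ref{LimRat}; approximate equality of the column ratios follows. Your argument instead rests on the exact identity $\frac{c_{21}}{c_{11}}-\frac{c_{22}}{c_{12}}=\frac{-\mu d^n\det P}{c_{11}c_{12}}$, with numerator of constant modulus (since $|d|=1$) and denominator tending to infinity; this is really the column analogue of the identity the paper itself uses in Section \ref{RowErr} to show that row errors cannot be corrected, so it fits the paper's determinant bookkeeping naturally. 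What your route buys is a quantitative statement: the discrepancy equals $|\mu\det P|/(c_{11}c_{12})$ exactly, decaying geometrically in $n$, which could even serve the receiver as an explicit tolerance when using the column ratio as a check number. What the paper's route (especially when combined with the closed forms $A_n=a_+\varphi_+^{\,n}+a_-\varphi_-^{\,n}$, $B_n=b_+\varphi_+^{\,n}+b_-\varphi_-^{\,n}$, i.e.\ your alternative argument) buys is the existence of a common limiting value depending on $P$, which is the quantity the section actually names the column ratio; your difference argument alone shows the two ratios agree asymptotically but not that either one converges. Two small points in your write-up: ``increasing, hence geometric growth'' is a non sequitur as literally stated---the correct justification, which you have at hand, is that the row ratios converge to $\varphi>1$ and hence are eventually bounded below by a constant exceeding $1$; and your standing hypothesis that the top row of $P$ is nonzero is harmless, since otherwise $c_{11}=c_{12}=0$ and neither column ratio is defined, an assumption the paper's proof makes tacitly as well.
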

\begin{proof}
By definition of encryption, we have for the ratios of the column entries:
\begin{align*}
\frac{c_{21}}{c_{11}} &= \frac{ p_{21}A_{n+1}+p_{22}B_{n+1}}{p_{11}A_{n+1}+p_{12}B_{n+1}} = \frac{\frac{A_{n+1}}{B_{n+1}}p_{21} + p_{22}}{\frac{A_{n+1}}{B_{n+1}}p_{11}+p_{22}}\,;\\
\frac{c_{22}}{c_{12}} &= \frac{p_{21}A_{n}+p_{22}B_{n}}{p_{11}A_{n}+p_{12}B_{n} } = \frac{\frac{A_n}{B_n}p_{21}+p_{22}}{\frac{A_n}{B_n}p_{11}+p_{12}}\,.
\end{align*}
Since in conditions of Corollary \ref{LimRat} the ratios $\frac{A_{n+1}}{A_n}$ and $\frac{B_{n+1}}{B_n}$ converge to the same limit we also have $\frac{A_{n+1}}{B_{n+1}} \approx \frac{A_n}{B_n}$ for large $n$. But this implies 
$\frac{c_{21}}{c_{11}} \approx \frac{c_{22}}{c_{12}}.$
\end{proof}
Unlike the unimodular ratio, which only depended on $A_n$ and $B_n$, the column ratio also depends on the plaintext matrix $P$.
The next example demonstrates how much the column ratio can vary.

\vspace{0.5cm}\begin{example}
Of the two plaintext matrices below, $P_1$ has elements in a relatively small range, whereas $P_2$ has the same bottom row elements, but larger top row elements.  Observe how the resulting ciphertext matrices reflect those differences.
	\begin{align*}
		P_1\,M_n = \begin{bmatrix}7 & 8 \\ 3 & 5\end{bmatrix} &\times \begin{bmatrix} 21&8\\13&5 \end{bmatrix} = \begin{bmatrix} 251&96\\128&49 \end{bmatrix} = C_1\,, \\
		P_2\,M_n = \begin{bmatrix}56 & 45 \\ 3 & 5\end{bmatrix} &\times \begin{bmatrix} 21&8\\13&5 \end{bmatrix} = \begin{bmatrix} 1761&673\\128&49 \end{bmatrix} = C_2\,.
	\end{align*}
In $C_1$ the column ratio is approximately 1.96, in $C_2$ it is approximately 13.8. Thus, if only one row is known in the ciphertext matrix, there is no way to estimate the magnitude of the other row without knowing the ratios of the column elements.
\end{example}\vspace{0.5cm}

To allow the row double error correction one needs to send an additional check number, e.g. the column ratio. Even modest precision in it suffices to recover the mistransmitted row elements. Of course, there is a trade-off involved as it increases the size and reduces the security of the required transmission. The following examples demonstrate the double row error correction with the column ratio.

\vspace{0.5cm}\begin{example}
Recall the faulty ciphertext matrix from Example \ref{qrowratioex},
$$
C=\begin{bmatrix} x&y \\ 263 & 162\end{bmatrix}.
$$
Suppose that in addition to the Diophantine equation $162x-263y=-440$ the column ratio is known $\frac{c_{21}}{c_{11}} \approx \frac{c_{22}}{c_{12}} \approx 0.9$. Then the solutions can be estimated as $x \approx \frac{263}{0.9} \approx 292$ and $y \approx \frac{162}{0.9} \approx 180$. As seen from the table in Example \ref{qrowratioex}, the solution pair closest to the estimates clearly has $k=1$. Thus, the corrected matrix is
$$ 
C = \begin{bmatrix} 296 & 184 \\ 263 & 162 \end{bmatrix}.
$$
\end{example}

Similarly, the column ratio can be applied to a ciphertext matrix encrypted with a unimodular matrix.

\vspace{0.5cm}\begin{example}
Suppose that the plaintext matrix
$$
P = \begin{bmatrix} 14 & 20 \\ 9 & 7 \end{bmatrix}
$$ 
is encrypted with the Arnold's cat matrix, $n=4$, and the initial matrix $M_0=\begin{bmatrix}1&0\\1&1\end{bmatrix}$.  The ciphertext matrix is sent via a noisy channel and received as
$$
\widetilde{C} = \begin{bmatrix} 1325 & 321 \\ 733 & 280 \end{bmatrix}
$$
with check numbers $\det P = -82$ and $\frac{c_{21}}{c_{11}} \approx \frac{c_{22}}{c_{12}} \approx 0.5$. The receiver finds that the top row does not satisfy the unimodular checking relation, and the single error correction methods do not correct the matrix. Finally, it is assumed that the top row contains double errors, and
$$
C = \begin{bmatrix} x & y \\ 733 & 280 \end{bmatrix}.
$$
This gives the Diophantine equation $280x-733y=-82$.  Using the column ratio 0.5, we estimate $x \approx \frac{733}{0.5} \approx 1466$ and $y \approx \frac{280}{0.5} \approx 560$. Some possible solution pairs to the Diophantine equation around the estimates are $(717, 274)$, $(1450,554)$, $(2183,834)$ and $(2916,1114)$. Clearly, the solution nearest the approximation is 
$x=1450$ and $y=554$. Letting $c_{11}=1450$ and $c_{12}=554$ the correct ciphertext matrix is obtained:
$$ 
C =  \begin{bmatrix} 1450 & 544 \\ 733 & 280 \end{bmatrix}.
$$
\end{example}

\section{Conclusions}

We introduced a generalization of the golden cryptography that preserves its error correction benefits while increasing security of encryption by using extra free parameters. In particular, the generalization is not susceptible to the known 
types of chosen plaintext attacks. While all of the golden error correction carries over to the unimodular case, we uncovered that correction of double row errors is problematic already there, and offered a solution based on sending an additional check number, the column ratio.

Although the unimodular cryptography is more secure than the golden cryptography in terms of known attacks, further research is needed to make sure that it can not be compromised in some more elaborate ways. The effect of transmitting the column ratio on security of encryption also needs to be investigated. It is likely that additional layers of encryption, such as those suggested in \cite{Moh,Sud,Tah}, are needed to secure even the unimodular cryptography transmissions.

\end{document}